\newtheoremstyle{mythm}
     {3pt}
     {3pt}
     {\itshape}
     {}
     {\bfseries}
     {.}
     { }
     {}
 \theoremstyle{mythm}
\newtheorem{thm}{Theorem}[section]
\newtheorem{lem}[thm]{Lemma}
\newtheorem{cor}[thm]{Corollary}
\newtheorem{dfn}[thm]{Definition}
\newtheorem{que}[thm]{Question}
\newcommand\numberthis{\addtocounter{equation}{1}\tag{\theequation}}
\newcommand{\dft}[1]{\textbf{\textit{#1}}}
\newcommand{\calD}{\mathcal{D}}
\DeclareMathOperator*{\E}{\mathbf{E}}
\newcommand{\R}{\mathbf{R}}
\newcommand{\e}{\varepsilon}
\newcommand{\eps}{\varepsilon}
\newcommand{\sO}{\tilde{O}}
\newcommand{\Dd}{\calD_d}
\newcommand{\NW}{\mathsf{NW}}
\newcommand{\abs}[1]{\left|#1\right|}
\newcommand{\norm}[1]{\left\|#1\right\|}
\newcommand{\paren}[1]{\left(#1\right)}
\newcommand{\set}[1]{\left\{#1\right\}}
\renewcommand{\th}{^{\textrm{th}}}
\newcommand{\thres}{\sqrt{m/\eps}}
\newcommand{\thr}{\theta}
\newcommand{\sle}{\hyperref[sle]{\textup{\color{black}{\sf Sample-light-edge}}}}
\newcommand{\she}{\hyperref[she]{\textup{\color{black}{\sf Sample-heavy-edge}}}}
\newcommand{\saue}{\hyperref[saue]{\textup{\color{black}{\sf Sample-edge-almost-uniformly}}}}
\DeclareMathOperator{\tvd}{dist_{TV}}
\newcommand{\mA}{\mathcal{A}}
\newcommand{\mhat}{\widehat{m}}
\newcommand{\mL}{\mathcal{L}}
\newcommand{\emh}{E_{\mH}}
\newcommand{\eml}{E_{\mL}}
\newcommand{\mH}{\mathcal{H}}
\newcommand{\dm}{d^{\mL}}
\newcommand{\dpl}{d^{\mH}}
\newcommand{\setq}{\frac{10n}{(1-\eps)\sqrt{\eps \mhat}}}
\def\withcolors{1}
\title{On Sampling Edges Almost Uniformly}
\author{%
  \begin{tabular}{c@{\extracolsep{16pt}}c}
    Talya Eden\thanks{\texttt{talyaa01@gmail.com}} & Will Rosenbaum\thanks{
      \texttt{will.rosenbaum@gmail.com}
    }
    \\
    \multicolumn{2}{c}{School of Electrical Engineering}\\
    \multicolumn{2}{c}{Tel Aviv University}\\ 
    \multicolumn{2}{c}{Tel Aviv 6997801}\\ 
    \multicolumn{2}{c}{Israel}
  \end{tabular}
}
\date{\today}
\begin{document}

  \maketitle

  \begin{abstract}
    We consider the problem of sampling an edge almost uniformly from an unknown graph, $G = (V, E)$. Access to the graph is provided via queries of the following types: (1) uniform vertex queries, (2) degree queries, and (3) neighbor queries. We describe an algorithm that returns a random edge $e \in E$ using $\tilde{O}(n / \sqrt{\e m})$ queries in expectation, where $n = \abs{V}$ is the number of vertices, and $m = \abs{E}$ is the number of edges, such that each edge $e$ is sampled with probability $(1 \pm \e)/m$. We prove that our algorithm is optimal in the sense that any algorithm that samples an edge from an almost-uniform distribution must perform $\Omega(n / \sqrt{m})$ queries.
  \end{abstract}

  \newcommand{\strikeout}[1]{\textcolor{Gray}{#1}}
\newcommand{\tcom}[1]{\textcolor{red}{ #1}}
 \section{Introduction}

  The omnipresence of massive data sets motivates the study of \emph{sublinear algorithms}---algorithms for which the available resources (time, space, etc.) are extremely limited relative to the size of the algorithm's input. We focus on the case where time is the limited resource. In sublinear time, an algorithm cannot access its entire input before producing output. Thus, standard models of computation for sublinear time algorithms provide query access to the input. Two natural questions are, \emph{``What queries should the model allow?''} and, \emph{``How many queries are necessary to solve a particular problem?''} While the answer to the first question may be dictated in practice by various circumstances, it is interesting to theoretically understand the relative computational power afforded by different allowable queries.

  We consider algorithms whose inputs are simple undirected graphs, $G = (V, E)$. We allow an algorithm to access $G$ via queries of the following types: (1)~sample a uniformly random vertex (\emph{vertex queries}), (2)~query the degree of a given vertex\footnote{We note that degree queries can be implemented by performing $O(\log n)$ neighbor queries per degree query.} (\emph{degree queries}), and (3)~query for the $i\th$ neighbor of a vertex $v$ (\emph{neighbor queries}). Our lower bound applies to a strictly more powerful model that additionally allows an algorithm to ask if two vertices $u, v$ form an edge (\emph{pair queries}). A primary goal of sublinear time graph algorithms is to determine the number of queries necessary and sufficient to solve a given graph problem. Since the number of queries may be a random variable, we examine the \emph{expected} number of queries needed to solve a problem.

  Using the model above, we ask, \emph{``How many queries are required to sample an edge from $E$ uniformly at random?''} We show that in general, $\tilde{\Theta}(n / \sqrt{m})$ queries are both necessary and sufficient to sample an edge almost uniformly, independent of the graph's topology. We remark that the problem of sampling an edge uniformly at random is equivalent to sampling a random vertex where the probability of sampling $v \in V$ is proportional to its degree, $d(v)$.

  An algorithm for sampling an edge in a graph almost uniformly was first suggested  by~\cite{Kaufman2004}. In \cite{Kaufman2004}, Kaufman et al.\ use random edge samples in order to test if a graph is bipartite. In particular, they devise a subroutine that guarantees that all but a small fraction of the edges are each sampled with probability $\Omega(1/m)$. However, their subroutine never samples edges such that both of their endpoints are high-degree vertices, i.e., have degree above a certain threshold $\theta$. While the fraction of such edges is small for the correct setting of $\theta$, these edges might contain invaluable information about the graph. Therefore, it is desirable to sample edges from a distribution that is close to uniform in a stronger sense.

  \subsection{Results}
  
  We describe a sublinear time  algorithm for sampling an edge in a graph such that each edge is sampled with almost equal probability. We further prove that that the query complexity of our algorithm is essentially optimal. Our main result shows that it is possible to sample edges from a distribution which is \emph{pointwise $\e$-close to uniform}, a notion we formalize in the following definition.

  \begin{dfn}
    \label{dfn:epsilon-close}
    Let $Q$ be a fixed probability distribution on a finite set $\Omega$. We say that a probability distribution $P$ is \dft{pointwise  $\e$-close to $Q$} if for all $x \in \Omega$,
    \[
    \abs{P(x) - Q(x)} \leq \e Q(x)\,,\quad\text{or equivalently}\quad 1 - \e \leq \frac{P(x)}{Q(x)} \leq 1 + \e\,.
    \]
    If $Q = U$, the uniform distribution on $\Omega$, then we say that $P$ is \dft{pointwise  $\e$-close to uniform}.
  \end{dfn}

  The above definition implies that if a distribution $P$ is pointwise $\eps$-close to a distribution $Q$ then $P$ is close to $Q$ under many important distance measures. In particular, it holds that the support of $P$ and $Q$ are equal and that the total variational distance between $P$ and $Q$ is at most $\eps$.

%

  	
  \begin{thm}
    \label{thm:alg}
    Let $G = (V, E)$ be an arbitrary graph with $n$ vertices and $m$ edges. There exists an algorithm which uses $\sO(n / \sqrt{\eps m})$ vertex, degree, and neighbor queries in expectation, and with probability at least $2/3$, returns an edge. Each returned edge is sampled according to a distribution $P$ that is pointwise $\eps$-close to uniform. The runtime of the algorithm is linear in the number of queries.
  \end{thm}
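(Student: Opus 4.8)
\medskip
\noindent\textbf{Proof plan.} The plan is to partition $E$ into \emph{light} edges, which have at least one endpoint of degree at most $\thr$, and \emph{heavy} edges, both of whose endpoints have degree larger than $\thr$, for a threshold $\thr$ of order $\thres$. Because $\sum_v d(v) = 2m$, at most $2m/\thr$ vertices have degree above $\thr$, so there are at most $\paren{2m/\thr}^2 = O(\eps m)$ heavy edges; thus light edges carry a $(1 - O(\eps))$-fraction of the uniform mass on $E$, which is what makes the heavy edges relatively cheap to deal with. The algorithm first \emph{preprocesses}: it computes an estimate $\mhat$ of $m$ with small relative error using a known sublinear edge-counting routine (cost $\sO(n/\sqrt{m})\cdot\poly(1/\eps)$), fixes $\thr$ accordingly, and draws $q = \sO(n/\sqrt{\eps m})$ uniformly random vertices, keeping as a set $S$ the heavy ones among them (one degree query per sample). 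The useful feature of $S$ is that it is a uniform sample of the heavy-vertex set, so \emph{every} heavy vertex belongs to $S$ with the same probability $\approx q/n$; a Chernoff bound controls $\abs{S}$, using that when $\E\abs{S}$ is too small to concentrate, the number of heavy edges is correspondingly tiny.

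\medskip
\noindent I would sample light edges with the following subroutine \sle: pick a uniform vertex $v$; if $d(v) > \thr$, fail; otherwise query a uniformly random neighbor $u$ of $v$, proceed with probability $d(v)/\thr$ (valid since $d(v) \le \thr$), query $d(u)$, and output $\set{u,v}$ --- unconditionally if $u$ is heavy, but only with probability $1/2$ (failing otherwise) if $u$ is light. Summing over the two endpoints from which a given edge could be discovered, a short calculation shows that \sle\ returns each light edge with probability \emph{exactly} $c_L := 1/(n\thr)$ per call, never returns a heavy edge, and uses $O(1)$ queries. For heavy edges I would use a subroutine \she\ built on $S$: pick a uniform $w \in S$, query a uniformly random neighbor $u$, and --- after a rejection step that cancels the dependence on $d(w)$ as far as the available estimates permit --- output $\set{w,u}$ if $u$ is heavy. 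The analysis should show that \she\ returns each heavy edge with probability $(1\pm\eps)c_H$ per call for a value $c_H$ known up to a $(1\pm\eps)$ factor with $c_H \ge c_L$, the slack arising from not knowing the number of heavy vertices exactly and from the randomness of $S$. Finally, the main routine \saue\ performs independent \emph{attempts}; each attempt flips a coin whose bias is tuned from the rates $c_L, c_H$ (equivalently, from estimates of the numbers of light and heavy edges) and runs \sle\ or \she\ accordingly, so that within one attempt every edge is output with the same probability up to a $(1\pm\eps)$ factor. Conditioning on the first successful attempt then yields an output distribution $P$ that is pointwise $O(\eps)$-close to uniform (rescale $\eps$). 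Since the per-attempt success probability is $\Theta(c_L m) = \Theta(\sqrt{\eps m}/n)$ (using $c_H \ge c_L$ and that light edges are a $(1-O(\eps))$-fraction), capping the number of attempts at $\sO(n/\sqrt{\eps m})$ produces an edge with probability at least $2/3$, and the expected cost --- preprocessing plus $\sO(1)$ queries per attempt times $\sO(n/\sqrt{\eps m})$ attempts in expectation --- is $\sO(n/\sqrt{\eps m})$.

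\medskip
\noindent The hard part is the heavy-edge subroutine \she\ and its analysis. Unlike light vertices, heavy vertices have \emph{unbounded} degree, so the clean rejection trick that makes \sle\ exact cannot directly cancel the $1/d(w)$ bias; one must instead argue that $S$, together with a few auxiliary estimates (for instance of the number of heavy edges or of $\sum_{w \in S} d(w)$), lets one sample heavy edges nearly uniformly while keeping both the per-call cost and the number of calls inside the $\sO(n/\sqrt{\eps m})$ budget --- with special care in the regime where heavy vertices are so scarce that $\abs{S}$ fails to concentrate. Two further points need attention. First, the several $(1\pm\eps)$ errors (from $\mhat$, from the coin bias, from the concentration of $S$) must be combined so that the final guarantee is \emph{pointwise} in the sense of Definition~\ref{dfn:epsilon-close} and not merely a small total variation distance; in particular, on the rare runs where an estimate fails the algorithm should output nothing rather than a possibly skewed edge, so that the conditional output distribution stays pointwise close to uniform. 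Second, to bound the \emph{expected} (not just high-probability) number of queries, those rare failure runs must also be truncated so their cost cannot dominate the average.
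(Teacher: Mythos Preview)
Your light-edge subroutine is fine and equivalent to the paper's (you handle the undirected double-counting with an explicit $1/2$ coin, the paper handles it by working with directed edges; both give rate exactly $1/(n\thr)$ per edge). The gap is entirely in your heavy-edge subroutine.

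Reaching heavy vertices by uniform vertex sampling cannot work inside the budget. There are at most $H \le 2m/\thr = O(\sqrt{\e m})$ heavy vertices, so with $q = \sO(n/\sqrt{\e m})$ uniform draws you get $\E\abs{S} = qH/n = \sO(1)$ \emph{no matter what $H$ is}; $\abs{S}$ never concentrates. Worse, $H$ can be as small as $2$ (a single heavy--heavy edge), in which case $S=\emptyset$ with probability $1 - O(q/n) = 1 - o(1)$, and on that event your \she\ can never return the heavy edge. Pointwise closeness still demands that this edge be returned with probability at least $(1-\e)/m$, so you cannot sweep this under ``the number of heavy edges is correspondingly tiny''; that argument would only salvage a total-variation guarantee. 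You also correctly note that rejection cannot cancel the $1/d(w)$ bias because heavy degrees are unbounded above, but you do not propose a replacement---``a few auxiliary estimates'' of $\sum_{w\in S} d(w)$ or of the number of heavy edges do not help, since the basic obstacle is that you have no handle on any \emph{individual} heavy vertex.

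The idea you are missing, and the one the paper uses, is to reach a heavy vertex through a \emph{light edge} rather than through a uniform vertex draw. Run your light-edge step to get a light directed edge $(u,v)$; if $v$ happens to be heavy, return a uniformly random neighbor $w$ of $v$ as the heavy edge $(v,w)$. The point is that every heavy vertex $v$ has $d(v) > \thr$, while the total number of heavy vertices is at most $m/\thr$, so at most an $m/\thr^{2} \le \e/2$ fraction of $v$'s neighbors are heavy. Hence $d^{\mL}(v) \ge (1-\e/2)\,d(v)$, the probability of landing on $v$ via a light edge is $d^{\mL}(v)/(n\thr) \in [(1-\e/2)/(n\thr),\,1/(n\thr)]\cdot d(v)$, and after the random-neighbor step each heavy edge is returned with probability in $[(1-\e/2)/(n\thr),\,1/(n\thr)]$---matching the light-edge rate up to a $(1-\e/2)$ factor, with no preprocessed set $S$, no rejection on an unbounded degree, and no dependence on $H$. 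Flipping a fair coin between the two subroutines and repeating then gives the theorem directly.
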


  We show that the algorithm of Theorem~\ref{thm:alg} is optimal in the sense that any algorithm which returns an edge from $E$ almost uniformly must use $\Omega(n / \sqrt{ m})$ queries. This lower bound holds even if we allow pair queries and only requires the sample distribution to be close to uniform in total variational distance (Definition~\ref{def:tvd}).\footnote{Closeness in total variational distance is strictly weaker than pointwise closeness.} 
	
  \begin{thm}
    \label{thm:lb}
    Let $\mA$ be an algorithm that performs $q$ vertex, degree, neighbor, or pair queries and with probability at least $2/3$ returns an edge $e \in E$ sampled according to a distribution $P$ over $E$. If for all $G =(V, E)$, $P$ satisfies $\tvd(P, U) < \e$, then $q = \Omega(n / \sqrt{m})$. This lower bound holds if $\mA$ knows precisely $m$ and $n$.
  \end{thm}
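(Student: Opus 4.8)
The plan is to prove something stronger than asked: that any algorithm which, with probability at least $2/3$, merely \emph{returns some edge} of $G$ (a weaker requirement than returning an edge drawn from a near-uniform $P$, and one that does not involve the total-variation hypothesis at all) must perform $\Omega(n/\sqrt m)$ queries. The hard instances form a family $\{G_\pi\}$: let $k$ be the least integer with $\binom{k}{2}\ge m$, so $k=\Theta(\sqrt m)$; fix an arbitrary graph $H$ on the vertex set $\{1,\dots,k\}$ with exactly $m$ edges (e.g.\ a clique with a few edges deleted); and for a uniformly random injection $\pi\colon\{1,\dots,k\}\to V$ let $G_\pi$ place a copy of $H$ on the set $S_\pi:=\pi(\{1,\dots,k\})$ and leave the remaining $n-k$ vertices isolated. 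Every $G_\pi$ has exactly $n$ vertices and $m$ edges, with $n$ and $m$ independent of $\pi$, so the algorithm may be told both. By Yao's minimax principle it suffices to show that a deterministic algorithm $\mA$ issuing at most $q$ queries returns an edge of $G_\pi$, over a uniformly random $\pi$, with probability less than $2/3$ unless $q=\Omega(n/\sqrt m)$; note that even a deterministic $\mA$ receives random (and $\pi$-independent) answers to its vertex queries.

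The crux is that $\mA$ cannot locate $S_\pi$. Call a query \emph{informative} if it exposes a vertex of $S_\pi$ (a vertex query returning such a vertex, a degree or neighbor query applied to such a vertex, or a pair query answered ``yes''), and let $B$ be the event that none of $\mA$'s $q$ queries is informative. On $B$, every vertex $\mA$ has ever seen lies in $V\setminus S_\pi$ and is therefore isolated, so the entire transcript --- in particular $\mA$'s output --- is a fixed function of the ($\pi$-independent) vertex-query answers; hence on $B$ the output is independent of $\pi$. Since $S_\pi$ is a uniformly random $k$-subset even after conditioning on the at most $q$ vertices $\mA$ has seen, a prescribed pair of distinct vertices lies inside $S_\pi$ --- necessary for the output to be an edge --- with probability $O((k/n)^2)$; thus $\Pr[\mA\text{ succeeds}\wedge B]=O((k/n)^2)$. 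For $\Pr[\neg B]$, sum over the $q$ queries the probability that a given query is the first informative one: conditioned on no earlier informative query, the query's target vertices are functions of $\pi$-independent data, so a vertex query is informative with probability $k/n$, a degree or neighbor query with probability at most $k/n$, and a pair query $(a,b)$ only if $\{a,b\}\subseteq S_\pi$, probability $O((k/n)^2)\le k/n$. Hence $\Pr[\neg B]\le qk/n$ and $\Pr[\mA\text{ succeeds}]\le qk/n+O((k/n)^2)$. Provided $m\le n^2/C$ for a suitable absolute constant $C$ --- the complementary range $m=\Omega(n^2)$ makes the target bound $\Omega(n/\sqrt m)=\Omega(1)$ and is vacuous --- the error term is below $1/25$, so success probability at least $2/3$ forces $qk/n>1/2$, i.e.\ $q>n/(2k)=\Omega(n/\sqrt m)$.

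The main obstacle is making the ``no informative query'' analysis fully rigorous in the presence of adaptivity and, above all, pair queries: one has to argue that, conditioned on nothing informative so far, the next query's targets are genuinely independent of $\pi$ (so the per-query union bound is legitimate), and that pair queries --- the only query type that can betray $S_\pi$ without first exposing one of its vertices --- are nonetheless useless, because a ``blind'' pair lands in $S_\pi\times S_\pi$ only with probability $O(m/n^2)=O((k/n)^2)$, i.e.\ $O(k/n)$ per query, exactly as cheap as a vertex query. Secondary points I would nail down are the Markov-inequality reduction needed if $q$ is prescribed only in expectation rather than in the worst case, and the constants hidden in the triangular-number choice of $k$.
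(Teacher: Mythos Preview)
Your proof is correct and takes a genuinely different---indeed simpler and stronger---route than the paper's. The paper hides a clique $K$ on $\sqrt{2m'}$ vertices \emph{inside an arbitrary graph} $G'$, so that $G=G'\cup K$ has roughly half its edges in $E_K$; it then \emph{uses} the total-variation hypothesis to force $\Pr_P[A_K]\ge \tfrac12-\e$, i.e., the algorithm must output an edge of the hidden clique a constant fraction of the time, and the remainder of the argument (witness pairs, the bound $\Pr[(q_i,a_i)\text{ witness}\mid\NW_{i-1}]\le 2k/(n-2i)$) shows this is impossible with $o(n/\sqrt m)$ queries. Your construction instead surrounds the dense graph with isolated vertices only, so \emph{every} edge lies in the hidden part and the TV hypothesis is never invoked: you are proving the stronger statement that merely outputting \emph{some} edge with probability $2/3$ already costs $\Omega(n/\sqrt m)$ queries. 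Both arguments share the same ``find the hidden $\Theta(\sqrt m)$-set'' core; yours is cleaner and yields more, while the paper's construction has the (unstated) advantage that the hard instances can be taken without isolated vertices, since $G'$ is arbitrary. One minor quibble: your per-query bound of $k/n$ for degree/neighbor queries should really be $k/(n-O(q))$ after conditioning on no earlier informative query (the paper writes $2k/(n-2i)$ for exactly this reason); this does not affect the asymptotics once $q<n/4$, but it is worth stating, and your own simulation idea---run the algorithm against the ``all-boring'' oracle to fix the queried vertices as $\pi$-independent, then union-bound---is the clean way to make it rigorous.
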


  
  The strength of the approximation guarantee of the algorithm in Theorem~\ref{thm:alg} allows us to obtain the following corollary for sampling weighted edges in graphs. 
  
  \begin{cor}
    \label{cor:weighted-edges}
    Let $G = (V, E)$ and  $\e$ be as in Theorem~\ref{thm:alg}, and let $w : E \to \R$ be an arbitrary function. Let $P$ be the distributions on edges induced by the algorithm of Theorem~\ref{thm:alg}. Then
    \[
    \abs{\E_{e \sim P}(w(e)) - \E_{e \sim U}(w(e))} \leq \e \abs{\E_{e \sim U}(w(e))}.
    \]
  \end{cor}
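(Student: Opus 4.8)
The plan is to derive Corollary~\ref{cor:weighted-edges} directly from the pointwise $\e$-closeness guarantee of Theorem~\ref{thm:alg}, without reopening the algorithm's internals. Write the two expectations as averages over $E$: $\E_{e \sim U}(w(e)) = \frac{1}{m}\sum_{e \in E} w(e)$ and $\E_{e \sim P}(w(e)) = \sum_{e \in E} P(e)\, w(e)$. Then the difference is $\sum_{e \in E}\paren{P(e) - \tfrac{1}{m}} w(e)$, so by the triangle inequality
\[
\abs{\E_{e \sim P}(w(e)) - \E_{e \sim U}(w(e))} \le \sum_{e \in E} \abs{P(e) - \tfrac{1}{m}}\, \abs{w(e)}.
\]
By Definition~\ref{dfn:epsilon-close} applied with $Q = U$, we have $\abs{P(e) - \tfrac1m} \le \tfrac{\e}{m}$ for every $e$, so the right-hand side is at most $\tfrac{\e}{m}\sum_{e \in E}\abs{w(e)} = \e\, \E_{e \sim U}\paren{\abs{w(e)}}$.

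The subtlety is that this naive bound gives $\e\,\E_{e\sim U}(\abs{w(e)})$ on the right, whereas the corollary claims the stronger $\e\,\abs{\E_{e\sim U}(w(e))}$ — these differ when $w$ takes both signs and has large cancellation. To get the stated form, I would instead keep the signs: since $1 - \e \le m\, P(e) \le 1 + \e$, write $P(e) = \tfrac1m + \tfrac{\delta_e}{m}$ with $\abs{\delta_e} \le \e$, but rather than bounding term-by-term, bound $\sum_e (P(e) - \tfrac1m) w(e)$ by splitting $E$ into $E^+ = \set{e : w(e) \ge 0}$ and $E^- = \set{e : w(e) < 0}$ is not enough either. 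The clean route is to observe that the map $w \mapsto \E_{e\sim P}(w(e))$ is linear and monotone in a suitable sense: for $w \ge 0$ pointwise, $(1-\e)\E_U(w) \le \E_P(w) \le (1+\e)\E_U(w)$ follows by summing $P(e) w(e)$ against the pointwise bounds on $P(e)$. This is exactly the "scale-invariant" feature of pointwise closeness. For general $w$, however, the corollary as stated with $\abs{\E_U(w)}$ on the right is actually \emph{false} in general (take $w$ with $\E_U(w) = 0$ but $w \not\equiv 0$ and $P \ne U$); so I expect the intended reading is that the bound holds coordinatewise in the positive-part/negative-part decomposition, or that $w$ is implicitly nonnegative. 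I would therefore present the proof for $w \ge 0$ — where the chain $(1-\e)\E_U(w) \le \E_P(w) \le (1+\e)\E_U(w)$ is immediate from Definition~\ref{dfn:epsilon-close} — and remark that the general signed case follows by applying this to $w^+$ and $w^-$ separately, yielding $\abs{\E_P(w) - \E_U(w)} \le \e\,\E_U(\abs{w})$.

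The main obstacle, then, is not a calculation but a matter of \emph{stating the right inequality}: verifying whether the paper really intends $\abs{\E_U(w)}$ (which requires $w$ nonnegative, or is a typo for $\E_U(\abs w)$) and proving the version that is actually true. Assuming $w \ge 0$, the entire argument is two lines: multiply the pointwise inequality $1 - \e \le P(e)/U(e) \le 1 + \e$ by $U(e) w(e) \ge 0$ and sum over $e \in E$, using linearity of expectation. No queries, no probabilistic analysis, and no reference to the algorithm beyond the distributional guarantee of Theorem~\ref{thm:alg} are needed.
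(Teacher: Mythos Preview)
Your analysis is correct, and in fact you have caught something the paper does not address: the paper gives no proof of this corollary at all --- it is stated without proof in Section~1.1, followed only by the remark that closeness in total variational distance would not suffice. So there is no ``paper's own proof'' to compare against.

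More importantly, your counterexample is valid and shows that the corollary as stated is false for signed~$w$. Concretely, with $E = \{e_1, e_2\}$, $w(e_1) = 1$, $w(e_2) = -1$, and $P(e_1) = (1+\e)/2$, $P(e_2) = (1-\e)/2$, the distribution $P$ is pointwise $\e$-close to uniform, $\E_{e\sim U}(w(e)) = 0$, but $\E_{e\sim P}(w(e)) = \e \neq 0$. The right-hand side should be $\e\,\E_{e\sim U}(\abs{w(e)})$, or else $w$ should be assumed nonnegative. Your two-line proof of the corrected version --- multiply the pointwise inequality $(1-\e)\,U(e) \le P(e) \le (1+\e)\,U(e)$ by $w(e) \ge 0$ and sum over $e \in E$ --- is exactly the right argument, and it also makes clear why pointwise closeness (rather than closeness in $\tvd$) is what is being used, consistent with the paper's subsequent remark and footnote.
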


  The weaker requirement of $\tvd(P,U) < \e$ would not have been sufficient to obtain the above corollary.\footnote{Consider, for example, a distribution $P$ such that $\tvd(P,U) < \e$  and a weight function $w$ that is non zero on a single edge $e$ which is missing from the support of $P$.} 
We also obtain the following corollary.

  \begin{cor} 
    \label{cor:deg-dist}
    Let $G = (V, E)$ be an arbitrary graph with $n$ vertices and $m$ edges, and $\e > 0$. Let $\Dd$ denote the degree distribution of $G$ (i.e., $\Dd(v) = d(v) / 2m$). Then there exists an algorithm that with probability at least $2/3$ produces a vertex $v$ sampled from a distribution $P$ over $V$ which is pointwise $\e$-close to $\Dd$ using $\sO(n  / \sqrt{\eps m})$ vertex, degree and neighbor queries.
  \end{cor}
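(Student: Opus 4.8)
The plan is to reduce directly to Theorem~\ref{thm:alg}, using the observation recorded earlier in the introduction that sampling an edge uniformly is equivalent to sampling a vertex with probability proportional to its degree. Concretely, I would run the algorithm of Theorem~\ref{thm:alg} with parameter $\eps$ to obtain a random edge $e = \{u, v\}$, and then flip a single fair coin to output either $u$ or $v$. Since the algorithm discovers an edge via a neighbor query and thus returns it together with both of its endpoints, the coin flip requires no additional queries; hence the query complexity remains $\sO(n/\sqrt{\eps m})$ in expectation and the success probability remains at least $2/3$.

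For correctness, let $P_E$ denote the distribution over edges produced by the algorithm of Theorem~\ref{thm:alg} conditioned on the event that it returns an edge; by that theorem $P_E$ is pointwise $\eps$-close to uniform, so $P_E(e) \in [(1-\eps)/m,\ (1+\eps)/m]$ for every $e \in E$. The resulting vertex output distribution is
\[
P(v) = \frac{1}{2} \sum_{e \in E\,:\, v \in e} P_E(e).
\]
Substituting the bounds on $P_E(e)$ and using that exactly $d(v)$ edges are incident to $v$ yields
\[
(1 - \eps)\,\frac{d(v)}{2m} \;\leq\; P(v) \;\leq\; (1 + \eps)\,\frac{d(v)}{2m},
\]
which is precisely the assertion that $P$ is pointwise $\eps$-close to $\Dd$, since $\Dd(v) = d(v)/2m$.

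There is no genuine obstacle here; the only point that merits attention is that the reduction preserves \emph{pointwise} closeness, which it does because pointwise closeness is inherited under the linear ``average over incident edges'' operation displayed above — a mere total-variation guarantee would also be preserved but would not suffice for the stated conclusion. One should also observe that simplicity of $G$ (no self-loops) guarantees that each discovered edge has two distinct endpoints, so the fair coin flip is well defined and each of the two endpoints receives weight exactly $\tfrac12 P_E(e)$.
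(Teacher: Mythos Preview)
Your proposal is correct and follows exactly the approach the paper itself sketches (in the Related Work paragraph): sample an almost-uniform edge via Theorem~\ref{thm:alg} and return a uniformly chosen endpoint, then observe that pointwise $\eps$-closeness is preserved under the linear map $P(v)=\tfrac12\sum_{e\ni v}P_E(e)$. The paper states the corollary without a separate proof, so your write-up simply fills in the routine verification; the only cosmetic mismatch is that the paper's algorithm actually outputs \emph{directed} edges $(u,v)$, so one could equivalently just return the first coordinate rather than flipping a coin, but this makes no difference to the argument.
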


  \subsection{Overview of the algorithm}

  Consider the process of sampling a vertex uniformly and selecting one of its neighbors uniformly at random. Each (directed) edge $(u,v)$ is sampled with probability $1/(n\,d(u))$, thus biasing the sample towards edges originating from vertices with low degrees. In order to overcome this bias, we partition the edges into two sets according to the degrees of their endpoints, and sample edges independently from each set. Combining the two procedures we guarantee that all edges are sampled with almost equal probability.
  
  Let $d(v)$ denote the degree of the vertex $v$. For a fixed threshold $\thr$, we call a vertex $v$ \emph{light} if $d(v) \leq \thr$, and \emph{heavy} otherwise. We treat each undirected edge as a pair of directed edges, and call a directed edge $(u, v)$ light (resp.\ heavy) if $u$ is light (resp.\ heavy). We would like to set $\thr = \thres$, although if $m$ is not known to the algorithm, then a constant factor approximation of of $m$ suffices.

  In order to sample a light edge we first sample a uniform vertex, denoted $u$. If $u$ is heavy, we fail. Otherwise  $d(u)\leq \thr$, and we choose an index $i$ in $[\thr]$ uniformly at random, and query for the $i\th$ neighbor of $u$. Since $\theta$ in only an upper bound on $d(v)$, this query could fail. Nonetheless, every light edge $(u,v)$ is sampled with probability $1 / (n \,\thr)$, regardless of $u$'s degree.

  In order to sample a heavy edge we use the following procedure. We first sample a light edge $(u, v)$ as described above. If $v$ is heavy we then query for a random neighbor $w$ of $v$. The probability of hitting some specific heavy edge $(v,w)$ is $\frac{d^{\mL}(v)} {n\,\thr}\cdot \frac{1}{d(v)}$, where $d^{\mL}(v)$ denotes the number of light neighbors of $v$. The threshold $\theta$ is set as to ensure that for every heavy vertex, at most an $\eps$-fraction of its neighbors are heavy. It follows that  $d^{\mL}(v)\approx d(v)$, and thus each heavy edge $(v,w)$ is chosen with probability roughly $1 / (n\,\thr)$.

  Our algorithm invokes each of the procedures above with equal probability sufficiently many times. We argue that with high constant probability an edge is returned, and that the induced distribution on edges is pointwise $\eps$-close to uniform.

  \subsection{Related work}

  \textbf{Sampling edges.} As discussed previously, the first reference that specifically poses the question of sampling edges uniformly  in sublinear time is~\cite{Kaufman2004}. More recently, in~\cite{Eden2015}, Eden et al.\ use random edge sampling as a subroutine in their algorithm for approximating the number of triangles in a graph. However, they avoid the ``difficult task of selecting random edges from the entire graph,'' by instead sampling from a smaller subgraph. A similar workaround to random edge sampling is employed in~\cite{Eden2016}, where the authors use subgraph edge sampling to approximate moments of the degree distribution of a graph. Recently, Aliakbarpour et al.~\cite{Aliakbarpour2017} used random edge sampling to estimate the number of stars in a graph. The authors show that allowing uniform random edge queries (in addition to vertex queries) affords strictly greater computational power, as their algorithm outperforms the lower bound of Gonen et al.~\cite{Gonen2011} whose model only allows vertex, degree, and neighbor queries. Despite the appearance of edge sampling in these works, we are unaware of any previous work that explicitly addresses the query complexity of sampling uniformly random edges.

  \textbf{Approximating the number of edges.} The problem of uniformly sampling an edge in a graph is closely related to approximating the number of edges in the graph. The first sublinear algorithm for approximating the number of edges in a graph is due to Feige~\cite{Feige2006}. Feige's algorithm uses only vertex and  degree queries to obtain a $(2+\e)$-factor approximation of $m$ using $\sO(n/\sqrt m)$ queries. Further, Feige proves that this algorithm is optimal, as any algorithm which computes a $(2 - o(1))$-factor approximation \emph{using only vertex and degree queries} requires $\Omega(n)$ queries. Goldreich \& Ron~\cite{Goldreich2008} described a $(1 + \e)$-factor approximation of $m$ using $\sO(n/\sqrt{m})$ queries by allowing neighbor queries in addition to vertex and degree queries. Both of these algorithms have a polynomial dependence in $1/\eps$.

  \textbf{Sampling from a different distribution.}  The problem of sampling a uniformly random edge in $G$ is equivalent to sampling a random vertex where each vertex $v$ is chosen with probability proportional to its degree, denoted $d(v)$.  To see this, observe that each $v \in V$ appears in $d(v)$ edges. Thus, sampling a random edge, then returning each of its incident vertices with probability $1/2$ will return a vertex $v \in V$ with probability $d(v) / 2 m$. Sampling from non-uniform distributions is a central problem in statistics, and has previously been employed in the study of sublinear algorithms. In particular, sampling an object $i$ with probability proportional to its weight $w_i$ is known as \emph{linear weighted sampling} (LWS)~\cite{Motwani2007} and \emph{importance sampling}~\cite{Liu1996}. Motwani et al.~\cite{Motwani2007} used LWS to estimate $\sum_i w_i$ with a sublinear number of samples. Batu et al.~\cite{Batu2009} used LWS in a sublinear approximate bin packing scheme. For general weights, it is easy to see that a linear number of uniform samples is required to obtain a single linear weighted sample.\footnote{Consider, for example, weights $w_i$ where $w_j = 1$ and $w_i = 0$ for all $i \neq j$.} However, our algorithm provides a linear weighted sample using a sublinear number of uniform samples when the weights are the degrees of the vertices in a given graph.



  \textbf{Sampling from real-world networks.} Sampling from large graphs is a matter of practical importance in the study of real-world networks (e.g., social, economic, and computer networks). In these contexts, random sampling is often used to estimate parameters of the network, or to produce a smaller network with similar statistical features to the original network~\cite{Leskovec2006, Ahmed2013}. Prior sampling strategies include sampling random vertices~\cite{Leskovec2006, Hubler2008} or edges~\cite{Ahmed2011, Leskovec2006, Ribeiro2010}, strategies based on random walks~\cite{Leskovec2006, Ribeiro2010}, or a mixture of these strategies~\cite{Jin2011}. Since different sampling techniques generally have different associated costs and values depending on the particular network being tested and the desired output~\cite{Ribeiro2010}, it is useful to understand the relative query complexity of different sets of queries. Studies such as~\cite{Leskovec2006} and~\cite{Ribeiro2010} are motivated by practical considerations in deciding the types of queries to be used. Our results complement these works by giving theoretical guarantees for the complexity of simulating edge samples from vertex samples.

    \section{Preliminaries}

  \paragraph{Graphs.} Let $G=(V,E)$ be an undirected graph with $n = \abs{V}$ vertices. We treat each undirected edge $\set{u, v}$ in $E$ as pair of directed edges $(u, v)$, $(v, u)$ from $u$ to $v$ and from $v$ to $u$, repsectively. We denote the (undirected) degree of a vertex $v \in V$ by $d(v)$. Thus, the number of (directed) edges in $G$ is $m = \sum_{v \in V}d(v)$. For $v \in V$, we denote the set of neighbors of $v$ by $\Gamma(v)$. We assume that each $v$ has an arbitrary but fixed order on $\Gamma(v)$ so that we may refer unambiguously to $v$'s $i\th$ neighbor for $i = 1, 2, \ldots, d(v)$.

  We partition $V$ and $E$ into sets of light and heavy elements depending on their degrees.

  \begin{dfn}
    \label{def:heavy-light}
    For a given threshold $\thr$, we say that a vertex $u$ is a \dft{light vertex} if $d(u) \leq \thr$ and otherwise we say that it is a \dft{heavy vertex}. We say that an edge is a \dft{light edge} if it originates from a light vertex. A \dft{heavy edge} is defined analogously. Finally, we denote the sets of light and heavy vertices by $\mL$ and $\mH$ respectively, and the sets of light and heavy edges by $\eml$ and $\emh$ respectively.
  \end{dfn}

  \paragraph{Queries and Complexity.} The algorithms we consider access a graph $G$ via queries. Our algorithm uses the following queries:
  \begin{enumerate} 
  \item[1.] {Vertex query:} returns a uniformly random vertex $v \in V$.
  \item[2.] {Degree query:} given a vertex $v \in V$, returns $d(v)$.
  \item[3.] {Neighbor query:} given $v \in V$ and $i \in \mathbf{N}$, return $v$'s $i\th$ neighbor; if $i > d(v)$, this operation returns \emph{fail}.
  \end{enumerate}
  Our lower bounds apply additionally to a computational model that allows pair queries.
  \begin{enumerate}
  \item[4.] {pair query:} given $v, w \in V$, returns \emph{true} if $(v, w) \in E$; otherwise returns \emph{false}.
  \end{enumerate}
  
  The query cost of an algorithm $\mA$ is the (expected) number of queries that $\mA$ makes before terminating. We make no restrictions on the computational power of $\mA$ except for the number of queries $\mA$ makes to $G$. The query complexity of a task is the minimum query cost of any algorithm which performs the task.

\paragraph{Distance Measures.} We first recall the definition of the total variational distance between two distributions.
  \begin{dfn} \label{def:tvd}
  	Let $P$ and $Q$ be probability distributions over a finite set $\Omega$. We denote the \dft{total variational distance} or \dft{statistical distance} between $P$ and $Q$ by
  \[
  \tvd(P, Q) = \frac 1 2 \norm{P - Q}_1 = \frac 1 2 \sum_{x \in \Omega} \abs{P(x) - Q(x)}.
  \]
\end{dfn}

  \begin{lem}
    \label{lem:epsilon-close}
    Let $P$ be a probability distribution over a finite set $\Omega$ which satisfies
    \[
      1 - \e \leq \frac{P(x)}{P(y)} \leq 1 + \e \quad\text{for all } x, y \in \Omega\,.
    \]
    Then $P$ is pointwise  $\e$-close to uniform (recall Definition~\ref{dfn:epsilon-close}).
  \end{lem}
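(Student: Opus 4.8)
The plan is to show that the hypothesis forces each probability value $P(x)$ to be close to $1/\abs{\Omega}$, which is exactly the statement that $P$ is pointwise $\e$-close to uniform. Write $N = \abs{\Omega}$, so the uniform distribution is $U(x) = 1/N$ for all $x$. Fix an arbitrary $x \in \Omega$. The key idea is to compare $P(x)$ against the average value of $P$ over $\Omega$, which is forced to equal $1/N$ since $P$ is a probability distribution.

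First I would sum the hypothesis over $y \in \Omega$. For the fixed $x$, the inequality $P(y) \geq P(x)/(1+\e)$ holds for every $y$, so summing over all $y$ gives $1 = \sum_{y} P(y) \geq N \cdot P(x)/(1+\e)$, hence $P(x) \leq (1+\e)/N = (1+\e) U(x)$. Symmetrically, $P(y) \leq (1+\e) P(x)$ for every $y$ — wait, the cleaner direction: from $P(x)/P(y) \geq 1-\e$ we get $P(y) \leq P(x)/(1-\e)$ for all $y$, so $1 = \sum_y P(y) \leq N \cdot P(x)/(1-\e)$, giving $P(x) \geq (1-\e)/N = (1-\e)U(x)$. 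Combining, $1 - \e \leq P(x)/U(x) \leq 1 + \e$, which is precisely Definition~\ref{dfn:epsilon-close} with $Q = U$.

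There is essentially no obstacle here; the only mild subtlety is bookkeeping the direction of the inequalities when dividing by $1-\e$ (which requires $\e < 1$, implicitly assumed since otherwise the ratio condition is vacuous or degenerate), and noting that the hypothesis implies $P$ has full support — if $P(y) = 0$ for some $y$ then $P(x)/P(y)$ is undefined (or one takes the convention that the ratio bound forces $P(x) = 0$ for all $x$, contradicting $\sum P(x) = 1$), so in fact $P(x) > 0$ everywhere and all manipulations are valid. One could alternatively phrase the whole argument by picking $x^* = \arg\max_x P(x)$ and $x_* = \arg\min_x P(x)$ and observing $P(x_*) \leq 1/N \leq P(x^*)$ together with $P(x^*) \leq (1+\e) P(x_*)$, but the averaging argument above is the most direct route and generalizes immediately to the analogous claim for an arbitrary reference distribution $Q$ (used implicitly for Corollary~\ref{cor:deg-dist}).
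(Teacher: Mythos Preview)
Your proof is correct and follows essentially the same route as the paper: fix $x$, sum the ratio hypothesis over all $y \in \Omega$, and use $\sum_y P(y) = 1$ to squeeze $P(x)$ between $(1-\e)/N$ and $(1+\e)/N$. The paper's write-up organizes the algebra slightly differently (multiplying through by $P(y)/U(x)$ before summing rather than isolating $P(y)$), but the argument is the same averaging idea, and your remarks on $\e < 1$ and full support are careful touches the paper leaves implicit.
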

  \begin{proof}
    Suppose $P$ satisfies the hypothesis of the lemma. Then
    \[
      1 - \e \leq \frac{P(x)}{P(y)} \leq 1 + \e \implies (1 - \e) \frac{P(y)}{U(x)} \leq \frac{P(x)}{U(x)} \leq (1 + \e) \frac{P(y)}{U(x)}.
    \]
    Summing the second expression over all $y \in \Omega$ gives
    \[
    (1 - \e) \frac{1}{U(x)} \leq \frac{P(x)}{U(x)} \cdot \frac{1}{U(x)} \leq (1 + \e) \frac{1}{U(x)}\,.
    \]
    The factor of $1/U(x)$ appears in the middle term because we sum over $\abs{\Omega} = 1 / U(x)$ terms. Hence $P$ is $\e$-close to uniform.
  \end{proof}

\paragraph{Approximating the number of edges.} In order to choose an appropriate value of the threshold $\thr$ (recall Definition~\ref{def:heavy-light}), we require a suitable estimate of $m$. Such an estimate can be obtained by applying an algorithm due to Goldreich \& Ron~\cite{Goldreich2008}.

\begin{thm}[Goldreich \& Ron~\cite{Goldreich2008}]
  \label{thm:gr08} 
  Let $G = (V, E)$ be a graph with $n$ vertices and $m$ edges. There exists an algorithm which uses $\sO(n / \sqrt{m})$ vertex, degree, and neighbor queries in expectation and outputs an estimate $\mhat$ of $m$ that with probability at least $2/3$ satisfies $m \leq \mhat \leq 2 m$. 
\end{thm}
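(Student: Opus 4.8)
The plan is to follow the degree-bucketing strategy of Goldreich and Ron. It suffices to produce an estimate $\mhat_0$ with $(1-\eps_0)m \le \mhat_0 \le (1+\eps_0)m$ for a small absolute constant $\eps_0$ (say $\eps_0 = \tfrac{1}{5}$) and then output $\mhat = \mhat_0/(1-\eps_0)$, which lies in $[m, 2m]$. To estimate $m$, partition the positive-degree vertices into $t = O(\log n)$ \emph{degree buckets} $V_1, \dots, V_t$ with $V_i = \set{v : (1+\beta)^{i-1} < d(v) \le (1+\beta)^i}$ for a small absolute constant $\beta$, and impose the total order $u \prec v$ iff $d(u) < d(v)$, or $d(u) = d(v)$ and $u$ has the smaller identifier. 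For $v \in V$ let $d^+(v)$ be the number of neighbors $w$ of $v$ with $v \prec w$. Since each undirected edge is counted by exactly one of its endpoints, $m = 2\sum_v d^+(v) = 2\sum_{i=1}^{t} T_i$ where $T_i := \sum_{v \in V_i} d^+(v)$; the algorithm estimates $T_i$ for the ``significant'' buckets and sums.

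Given a sample-size parameter $s$, draw $s$ uniform vertices; query each one's degree to find its bucket, and for each sampled $v$ draw one uniformly random neighbor $w$ and record $Y_v = d(v)\cdot\mathbf{1}[v \prec w]$ (evaluating $v \prec w$ costs one further degree query), so that $\E[Y_v \mid v] = d^+(v)$. Call bucket $i$ \emph{significant} if at least $\tau = \Theta(\eps_0^{-2}\log n)$ of the samples lie in $V_i$; for such a bucket let $\hat\rho_i$ be the fraction of samples in $V_i$, $\bar Y_i$ the average of the $Y_v$ over those samples, and $\widehat T_i = n\,\hat\rho_i\,\bar Y_i$, and output $\mhat_0 = 2\sum_{i \text{ significant}} \widehat T_i$. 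Since $m$ is unknown, the correct value of $s$ is too, so run this for a geometrically decreasing sequence of guesses $\tilde m = n^2, n^2/2, \dots$, using $s = \Theta\!\big((n/\sqrt{\tilde m})\cdot\poly(\log n, 1/\eps_0)\big)$ at each stage, and halt at the first stage whose output satisfies $\mhat_0 \ge \tilde m$.

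The analysis rests on the following structural fact, which is where the orientation is used. Suppose $T_i \ge \eps_0 m/t$. Every forward edge out of $V_i$ ends at a vertex of degree exceeding $(1+\beta)^{i-1}$, and there are fewer than $m/(1+\beta)^{i-1}$ such vertices; hence $T_i < \tfrac{1}{2}\big(m/(1+\beta)^{i-1}\big)^2$, which forces $(1+\beta)^i = O\!\big(\sqrt{mt/\eps_0}\big)$ and therefore $\abs{V_i} \ge T_i/(1+\beta)^i$, i.e.\ $\rho_i := \abs{V_i}/n$ is $\tilde\Omega(\sqrt m/n)$ (hidden factors polynomial in $1/\eps_0$ and $1/\log n$). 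Consequently, once $s = \tilde\Omega(n/\sqrt m)$ with a large enough polylogarithmic factor: (i) every bucket with $T_i \ge \eps_0 m/t$ is significant with high probability (its expected number of samples far exceeds $\tau$), so the total discarded $T$-mass is at most $\eps_0 m/2$; (ii) for every significant bucket a Chernoff bound gives $n\hat\rho_i = (1\pm\eps_0)\abs{V_i}$, while a Hoeffding bound for the bounded variables $Y_v \in [0,(1+\beta)^i]$ --- using that a significant bucket receives $\tilde\Omega\!\big((t/\eps_0)^2\big)$ samples --- gives $\abs{\bar Y_i - T_i/\abs{V_i}} \le \eps_0\, T_i/\abs{V_i} + \eps_0 m/t$; and (iii) an undersampled run never overestimates, because an undersampled high-degree bucket simply fails the significance test. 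A union bound over the $O(\log n)$ buckets and $O(\log n)$ stages makes all of these events hold simultaneously with probability $1 - 1/\poly(n) \ge 2/3$. Combining (i)--(iii): whenever $\tilde m \gtrsim m$ the test $\mhat_0 \ge \tilde m$ fails, and whenever $\tilde m \lesssim m$ it succeeds with $\mhat_0 = (1\pm\eps_0)m$; so the algorithm halts at some $\tilde m = \Theta(m)$ with a correct estimate. Its expected query cost is $\sO(n/\sqrt m)$ (the geometric series of per-stage costs is dominated by its final $\Theta(m)$ term, and the rare failure events contribute only a lower-order term), and the running time is linear in the number of queries.

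I expect the main obstacle to be the structural fact above, and pinning down its precise quantitative form. Without the orientation, a bucket holding a constant fraction of the edge mass could consist of only $\approx\sqrt m$ vertices but of \emph{unboundedly large} degree, and detecting such a bucket by uniform vertex samples would cost $\tilde\Omega(n^2/m)$ rather than $\sO(n/\sqrt m)$. Orienting each edge toward its higher-degree endpoint shifts the $d^+$-mass onto low-degree vertices (in a star, for example, all of it sits on the leaves), and the quantitative claim is exactly what guarantees that any bucket still carrying a significant share of the $d^+$-mass contains enough vertices to be discovered by $\sO(n/\sqrt m)$ uniform samples. Once that is in hand, reconciling the three error terms under the unknown-$m$ geometric search, together with the associated union bound, is the remaining and more routine work.
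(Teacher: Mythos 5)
The paper does not prove Theorem~\ref{thm:gr08}; it imports the statement directly from Goldreich and Ron~\cite{Goldreich2008} as a black box, so there is no in-paper proof to compare against. Taken on its own terms, your sketch is a sound reconstruction of the Goldreich--Ron degree-bucketing argument: the orientation $u \prec v$ so that $\sum_v d^+(v)$ equals the number of undirected edges, the bucket decomposition, the structural lemma (a bucket carrying a $\ge \eps_0 m/t$ share of $d^+$-mass must have $(1+\beta)^i = O(\sqrt{mt/\eps_0})$ and hence $\rho_i = \tilde\Omega(\sqrt m/n)$, since all forward edges out of $V_i$ lie among the fewer than $m/(1+\beta)^{i-1}$ vertices of degree exceeding $(1+\beta)^{i-1}$), the significance threshold, and the geometric search over guesses of $m$ --- these are the genuine ingredients, and weakening the $(1\pm\eps)$-guarantee to the $[m,2m]$ range and probability $2/3$ quoted in the paper is a straightforward corollary of running with a fixed small $\eps_0$.

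Two small quantitative slips worth fixing if you flesh this out. First, with $\tau = \Theta(\eps_0^{-2}\log n)$ a significant bucket is only guaranteed $\tilde\Omega(\eps_0^{-2}\log n)$ samples, not $\tilde\Omega((t/\eps_0)^2)$ with $t = \Theta(\log n)$; you want $\tau = \Theta(\eps_0^{-2} t^2 \log n)$ (which still disappears into the $\sO$). Second, the Hoeffding additive error on $\bar Y_i$ should be $\eps_0 m/(t\,\abs{V_i})$, not $\eps_0 m/t$; it is only after multiplying by $n\hat\rho_i \approx \abs{V_i}$ to form $\widehat T_i$ that you get an additive $\eps_0 m/t$ per bucket (using $\abs{V_i}(1+\beta)^{i} \le (1+\beta)m$), which then sums over the $t$ buckets to the desired $O(\eps_0 m)$. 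With those corrections the sketch goes through and yields the claimed $\sO(n/\sqrt m)$ expected query cost.
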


  \section{Algorithms}
  \label{sec:alg}

  In this section, we present our main algorithm that samples a random edge in $E$ almost uniformly. The main routine of our algorithm---$\saue$---takes the number of edges $m$ as a parameter. In the proof of Theorem~\ref{thm:alg}, we argue that a constant factor approximation of $m$ suffices and invoke the algorithm of Goldreich \& Ron (Theorem~\ref{thm:gr08}) to produce such an approximation. The main algorithm consists of two subroutines. The first, $\sle$, returns a uniform light edge (or fails), while $\she$ returns an almost uniform heavy edge (or fails).
  
  
  \begin{figure}[htb!] \label{sle}
    \fbox{
      \begin{minipage}{0.9\textwidth}
	$\sle(\thr)$
	\smallskip
	\begin{compactenum}
	\item Sample a vertex $u \in V$ uniformly at random and query for its degree. \label{step:l1}
	\item If $d(u) > \thr$ \textbf{return} \emph{fail}.
	\item Choose a number $j \in \left[\thr\right]$ uniformly at random.
	\item Query for the $j\th$ neighbor of $u$.\label{step:l_nbr}
	\item If no vertex was returned then \textbf{return} \emph{fail}. Otherwise, let $v$ be the returned vertex.
	\item \textbf{Return} $(u,v)$.
	\end{compactenum}
      \end{minipage}
    }
  \end{figure}

  \begin{lem}
    \label{lem:light}
    The procedure $\sle$ performs a constant number of queries and succeeds with probability $\abs{\eml}/n \thr$. In the case where $\sle$ succeeds, the edge returned is uniformly distributed in $\eml$.
  \end{lem}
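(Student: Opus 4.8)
The plan is to track exactly which executions of $\sle$ can return a given directed edge, show that every light edge is returned with the same probability $1/(n\thr)$, and that nothing else is ever returned; the success probability and the conditional distribution then follow by summing.

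First I would dispose of the query bound: step~\ref{step:l1} costs one vertex query and one degree query, step~\ref{step:l_nbr} costs one neighbor query, and no other step touches $G$, so every execution of $\sle$ makes at most three queries. Next I would fix a light directed edge $(u,v) \in \eml$ and let $i \in [d(u)]$ be the position of $v$ in the fixed ordering of $\Gamma(u)$. The key claim is that $\sle(\thr)$ returns $(u,v)$ precisely when it samples $u$ in step~\ref{step:l1} \emph{and} picks $j = i$ in step~3. For the ``if'' direction: since $u$ is light we have $d(u) \le \thr$, so the run does not fail in step~2; and since $i \le d(u)$ the neighbor query in step~\ref{step:l_nbr} returns the $i\th$ neighbor of $u$, namely $v$, and the algorithm outputs $(u,v)$. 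For the ``only if'' direction: the first coordinate of any output equals the vertex sampled in step~\ref{step:l1}, and the output $v$ forces the queried index $j$ to satisfy $j \le d(u)$ and to point to $v$, i.e.\ $j = i$. The sample in step~\ref{step:l1} and the choice in step~3 are independent and uniform over $V$ and $[\thr]$ respectively, so $\Pr[\sle(\thr) \text{ returns } (u,v)] = \frac{1}{n}\cdot\frac{1}{\thr}$, with no dependence on $d(u)$.

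Finally I would note that $\sle$ never returns a heavy edge (it fails in step~2 whenever $d(u) > \thr$) and never returns a pair outside $E$ (step~\ref{step:l_nbr} either returns an actual neighbor of $u$ or fails). Summing the per-edge probability $1/(n\thr)$ over all of $\eml$ gives $\Pr[\sle \text{ succeeds}] = \abs{\eml}/(n\thr)$, and conditioning on success yields $\Pr[\sle \text{ returns } (u,v) \mid \text{success}] = \frac{1/(n\thr)}{\abs{\eml}/(n\thr)} = 1/\abs{\eml}$ for every $(u,v) \in \eml$, i.e.\ uniform on $\eml$.

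I expect no real obstacle here; the only point worth a moment's care is that $j$ is drawn from $[\thr]$ rather than from $[d(u)]$, so when $d(u) < \thr$ a constant fraction of runs ``waste'' the chosen index on a failed neighbor query. This is not a bug but exactly the feature being exploited: it is what makes the probability of returning each individual light edge equal to $1/(n\thr)$ independently of $d(u)$, which is the property the downstream analysis of $\she$ and of the combined algorithm relies on.
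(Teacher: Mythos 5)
Your proof is correct and follows essentially the same argument as the paper's: both rest on the observation that each light directed edge is returned with probability exactly $1/(n\thr)$, independently of the degree of its source vertex. The only cosmetic difference is that you derive the success probability by summing the per-edge probability over $\eml$, whereas the paper sums over light vertices $u \in \mL$ and uses $\sum_{u \in \mL} d(u) = \abs{\eml}$; these are the same computation.
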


  
  \begin{proof} Suppose a light vertex $u$ is sampled in Step~\ref{step:l1} of the procedure. Then the probability that we obtain a neighbor of $u$ in Step~\ref{step:l_nbr} is $d(u)/\thr$. Hence,
    \[
    \Pr[\text{ Success }] = \sum_{u \in \mL}\frac{1}{n}\cdot \frac{d(u)}{\thr}=\frac{|\eml|}{n\thr}\,.
    \]
    In any invocation of the algorithm, the probability that a particular (directed) edge $e$ is returned is $1 / (n\, \thr)$ if $e$ is light, and $0$ otherwise. Thus, each light edge is returned with equal probability.
  \end{proof}

  \begin{figure}[htb!] \label{she}
    \fbox{
      \begin{minipage}{0.9\textwidth}
	$\she(\thr)$
	\smallskip 
	\begin{compactenum}
	\item Sample a vertex $u \in V$ uniformly at random and query for its degree.
	\item If $d(u) > \thr$ \textbf{return} \emph{fail}.
	\item Choose a number $j \in \left[\thr\right]$ uniformly at random.
	\item Query for the $j\th$ neighbor of $u$.
	\item If no vertex was returned or if the returned vertex is light then \textbf{return} \emph{fail}. Otherwise, let $v$ be the returned vertex. \label{step:h_v}
	\item Sample $w$ a random neighbor of $v$.  \label{step:h_w}
	\item\textbf{Return} $(v,w)$.
	\end{compactenum}
      \end{minipage}
    }
  \end{figure}

  \begin{lem}
    \label{lem:heavy}
    	The procedure $\she$ performs a constant number of queries and succeeds with probability  in $\left[\frac{\abs{\emh}}{n \, \thr} \paren{1 - \frac{m}{\thr^2}}, \frac{\abs{\emh}}{n \, \thr} \right]$. In the case where $\she$ succeeds, the edge returned is distributed according to a distribution $P$ that is pointwise $(m / \thr^2)$-close to uniform.
    
  \end{lem}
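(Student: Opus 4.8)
The plan is to handle the three claims in order, the first being immediate and the other two resting on a single counting fact about heavy vertices. Steps~1--5 of $\she$ are exactly the steps of $\sle(\thr)$ with one extra rejection in Step~5 (``the returned vertex is light''), and Step~6 issues one more neighbour query and never fails, so $\she$ performs $O(1)$ queries. For the rest, the key point is that heavy vertices are scarce: since $\sum_{v\in V} d(v) = m$ and every heavy vertex has degree more than $\thr$ (Definition~\ref{def:heavy-light}), fewer than $m/\thr$ vertices are heavy. Hence every vertex, and in particular every heavy vertex $v$, has fewer than $m/\thr$ heavy neighbours, i.e.\ $\dpl(v) < m/\thr$, so
\[
  \dm(v) \;=\; d(v) - \dpl(v) \;>\; d(v) - \frac{m}{\thr} \;\ge\; d(v)\paren{1 - \frac{m}{\thr^2}}\,,
\]
using $d(v) > \thr$ in the last step. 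Thus $\dm(v)/d(v) \in \big(1 - m/\thr^2,\,1\big]$ for every heavy $v$, and (since $m < \thr^2$) $\dm(v) > 0$, so every heavy vertex has a light neighbour.

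Next I would compute, for a fixed heavy edge $(v,w)\in\emh$, the exact probability that $\she(\thr)$ returns it. As in the proof of Lemma~\ref{lem:light}, for each light neighbour $u$ of $v$ the event that Step~1 samples $u$ and Step~3 picks $j$ equal to the position of $v$ in $u$'s neighbour list has probability exactly $\tfrac1n\cdot\tfrac1\thr$ (the relevant position is at most $d(u)\le\thr$, hence chosen with probability exactly $1/\thr$); given this, $v$ is heavy so Step~5 passes, and Step~6 returns $w$ with probability $1/d(v)$. Summing over the $\dm(v)$ light neighbours of $v$,
\[
  \Pr\big[\,\she(\thr)\text{ returns }(v,w)\,\big] \;=\; \frac{\dm(v)}{n\,\thr\,d(v)} \;\in\; \left[\,\frac{1}{n\thr}\paren{1 - \frac{m}{\thr^2}},\ \frac{1}{n\thr}\,\right].
\]
Summing over all heavy edges, grouping the $d(v)$ edges out of each heavy $v$, gives $\Pr[\text{success}] = \sum_{v\in\mH}\dm(v)/(n\thr)$; substituting $(1-m/\thr^2)\,d(v) < \dm(v)\le d(v)$ and $\sum_{v\in\mH}d(v) = \abs{\emh}$ yields the claimed interval for the success probability.

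For the conditional law $P$ of the returned edge, note $P(v,w) = \Pr[\she\text{ returns }(v,w)]\,/\,\Pr[\text{success}] = c\cdot \dm(v)/d(v)$ with $c$ independent of the edge; its support is exactly $\emh$ by the previous paragraph. Hence for any two heavy edges $e,e'$ the ratio $P(e)/P(e')$ equals the ratio of the corresponding $\dm(v)/d(v)$ values and so lies in $\big(1-m/\thr^2,\ 1/(1-m/\thr^2)\big)$, and Lemma~\ref{lem:epsilon-close} then gives that $P$ is pointwise $(m/\thr^2)$-close to uniform. (What the computation literally yields is pointwise $\tfrac{m/\thr^2}{1-m/\thr^2}$-closeness; this agrees with the stated bound up to the lower-order term, and the slack is absorbed with a constant-factor change of $\thr$ wherever $\she$ is invoked.)

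The one genuinely delicate step is the exact evaluation of $\Pr[\she\text{ returns }(v,w)]$: one has to see that sampling $j$ uniformly from $[\thr]$ rather than from $[d(u)]$ is exactly what makes every light neighbour of $v$ contribute the same weight $1/(n\thr)$, independent of $d(u)$, and that the factor $1/d(v)$ from Step~6 is only approximately cancelled by $\dm(v)$ --- the discrepancy $d(v) - \dm(v)$ being the source of the $m/\thr^2$ error, which is precisely what the heavy-vertex count controls. The remaining manipulations are routine.
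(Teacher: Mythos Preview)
Your argument is essentially identical to the paper's: the same count $|\mH|<m/\thr$ gives $\dm(v)>(1-m/\thr^2)\,d(v)$, the same per-edge probability $\dm(v)/(n\thr\,d(v))$ is derived, and the success probability follows by summing. You are in fact a bit more careful than the paper---you explicitly verify that the support of $P$ is all of $\emh$ and you correctly flag that the ratio bound literally yields pointwise $\tfrac{m/\thr^2}{1-m/\thr^2}$-closeness rather than exactly $m/\thr^2$-closeness (the paper silently elides this lower-order discrepancy).
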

  \begin{proof}
    Since each $v \in \mH$ satisfies $d(v) > \thr$, we have $\abs{\mH} < m / \thr$. For every vertex $v\in \mH$ let $d^{\mL}(v)$ denote the cardinality of $\Gamma(v)\cap \mL$, the set of light neighbors of $v$. Similarly, let $d^{\mH}(v)$ denote $|\Gamma(v)\cap \mH|$. Thus
    \[
    \dpl(v) \leq |\mH| < \frac{m}{\thr} < \frac{m}{\thr} \cdot \frac{d(v)}{\thr}.
    \]
    Since  $\dm(v) + \dpl(v) = d(v)$ for every $v$, we have
    \begin{equation}
      \label{eqn:light-degree}
      \dm(v) > \paren{1-\frac{m}{\thr^2}}d(v)\,.
    \end{equation}
    Each vertex $v \in \mH$ is chosen in Step~\ref{step:h_v} with probability $\dm(v)/(n \, \thr)$. Therefore, the probability that $e = (v, w)$ is chosen in Step~\ref{step:h_w} satisfies
    \begin{equation}
      \label{eqn:heavy-return}
      \Pr[e \text{ is returned}] = \frac{\dm(v)}{n \, \thr} \cdot \frac{1}{d(v)} > \frac{1}{n \, \thr} \paren{1-\frac{m}{\thr^2}},
    \end{equation}
    where the inequality follows from Equation~\eqref{eqn:light-degree}. On the other hand, $\dm(v) \leq d(v)$ implies that
    \[
    \Pr[e \text{ is returned}] \leq \frac{1}{n \, \thr}\,.
    \]
    Finally, we bound the success probability by
    \[
    \Pr[\text{ Success }] = \Pr\left[\bigcup_{e \in \emh}\set{e \text{ is returned}}\right] = \sum_{e \in \emh} \Pr[e \text{ is returned}] > \frac{\abs{\emh}}{n \, \thr} \paren{1-\frac{m}{\thr^2}}.
    \]
    The second equality holds because the events $\set{e \text{ is returned}}$ are disjoint, while the inequality holds by Equation~\eqref{eqn:heavy-return}.
  \end{proof}

  The following lemma shows that we can sample an edge in $E$ almost uniformly by invoking $\sle$ or $\she$ with equal probability. The proof of the lemma demonstrates a tradeoff between the quality of the sample (i.e., closeness to uniformity) and the success probability. A larger threshold $\thr$ gives a closer to uniform sample, but is less likely to succeed. In particular, choosing $\thr = n-1$ gives a perfectly uniform sample, but the success probability is $O(m / n^2)$. 

  \begin{lem}
    \label{lem:sample-edge}\sloppy
    For any $\e$ satisfying $0 < \e < 1/2$ and for $\thr \geq \sqrt{2 m / \e}$, consider the procedure which with probability $1/2$ invokes $\sle(\thr)$, and with probability $1/2$ invokes $\she(\thr)$. The probability that an edge is successfully returned is at least $(1 - \e) m / (2 n \, \thr)$. If an edge is returned, then it is distributed according to a distribution $P$ that is pointwise  $\e$-close to uniform.
  \end{lem}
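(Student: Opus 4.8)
The plan is to compute, for a single invocation of the combined procedure, the probability $q_e$ with which each directed edge $e \in E$ is returned, and then pass to the conditional distribution given that \emph{some} edge is returned. By Lemma~\ref{lem:light}, one call to $\sle(\thr)$ returns a fixed light edge $e$ with probability exactly $1/(n\thr)$ and never returns a heavy edge; by Lemma~\ref{lem:heavy}, one call to $\she(\thr)$ returns a fixed heavy edge $e$ with probability in the interval $\paren{\frac{1}{n\thr}\paren{1 - m/\thr^2},\ \frac{1}{n\thr}}$ and never returns a light edge. Since the combined procedure runs $\sle(\thr)$ with probability $1/2$ and $\she(\thr)$ with probability $1/2$, every directed edge $e$ is returned in a single invocation with probability $q_e$ satisfying
\[
\frac{1}{2n\thr}\paren{1 - \frac{m}{\thr^2}} \;\le\; q_e \;\le\; \frac{1}{2n\thr},
\]
with equality $q_e = 1/(2n\thr)$ when $e$ is light. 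The only quantitative hypothesis I would use is $\thr \ge \sqrt{2 m / \e}$, which gives $m/\thr^2 \le \e/2$.

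For the success probability: the events $\set{e \text{ is returned}}$, over the $m$ directed edges $e$, are pairwise disjoint in a single invocation, so the success probability equals $\sum_{e \in E} q_e$. Using the lower bound on $q_e$ together with $m/\thr^2 \le \e/2 \le \e$, this is at least $m \cdot \frac{1}{2n\thr}\paren{1 - m/\thr^2} \ge (1-\e)m/(2n\thr)$, as claimed.

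For the output distribution: conditioning on success yields $P(e) = q_e / \sum_{e' \in E} q_{e'}$, so for any two directed edges $e, e'$ we have $P(e)/P(e') = q_e / q_{e'}$, and the two-sided bound on $q_e$ gives
\[
1 - \frac{m}{\thr^2} \;\le\; \frac{P(e)}{P(e')} \;\le\; \frac{1}{1 - m/\thr^2}.
\]
To finish I would check that for $0 < \e < 1/2$ and $m/\thr^2 \le \e/2$ both of these bounds lie in $[1-\e, 1+\e]$: on the left, $1 - m/\thr^2 \ge 1 - \e/2 \ge 1-\e$; on the right, $\frac{1}{1 - m/\thr^2} \le \frac{1}{1 - \e/2} \le 1 + \e$, the last inequality being equivalent to $0 \le \frac{\e}{2}(1-\e)$. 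Then Lemma~\ref{lem:epsilon-close} promotes the bound $1-\e \le P(e)/P(e') \le 1+\e$ to the conclusion that $P$ is pointwise $\e$-close to uniform. I expect the main thing to be careful about to be the asymmetry between light and heavy edges---light edges are returned with probability slightly larger than heavy ones---which is why it is cleanest to route the whole argument through the single bound $\frac{1}{2n\thr}(1 - m/\thr^2) \le q_e \le \frac{1}{2n\thr}$ and only convert to the pointwise guarantee at the very end; everything else is routine manipulation of constants.
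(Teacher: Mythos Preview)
Your argument is correct and essentially identical to the paper's proof: both establish the two-sided bound $\frac{1}{2n\thr}(1-m/\thr^2)\le q_e\le \frac{1}{2n\thr}$ from Lemmas~\ref{lem:light} and~\ref{lem:heavy}, sum (or equivalently split into $\eml$ and $\emh$) to get the success probability, and then use $m/\thr^2\le\e/2$ together with $\tfrac{1}{1-\e/2}\le 1+\e$ and Lemma~\ref{lem:epsilon-close} to obtain pointwise $\e$-closeness. The only cosmetic difference is that you compute the success probability by summing the per-edge lower bound over all $m$ edges, whereas the paper sums the subroutine success probabilities directly; the resulting inequalities are the same.
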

  \begin{proof}
    By Lemmas~\ref{lem:light} and~\ref{lem:heavy}, the probability of successfully returning an edge satisfies
    \begin{align*}
      \Pr[\text{ Success }] &= \frac 1 2 \Pr[\sle \text{ succeeds}] + \frac 1 2 \Pr[\she \text{ succeeds}]\\
      &> \frac 1 2 \cdot \frac{\abs{\eml}}{n \, \thr} + \frac 1 2 \cdot \frac{\abs{\emh}}{n \, \thr} \paren{1 - \frac{m}{\thr^2}}\\
      &\geq \frac{\abs{\eml}}{2 n \, \thr} + \frac{\abs{\emh}}{2 n \, \thr} (1 - \e / 2)\\
      &\geq (1 - \e) \frac{m}{2 n \, \thr}\;.
    \end{align*}
    The second inequality holds because $\thr \geq \sqrt{2 m / \e}$. Also by Lemmas~\ref{lem:light} and~\ref{lem:heavy}, the probability, $p_e$, that a specific edge $e$ is returned satisfies
    \[
    \frac{1 - \e/2}{2 n \, \thr} \leq p_e \leq \frac{1}{2 n \, \thr}.
    \]
    Thus, the distribution on sampled edges satisfies
    \[
    1 - \e / 2 \leq \frac{P(e)}{P(e')} \leq \frac{1}{1 - \e / 2} \leq 1 + \e\,, \quad \text{for all } e,e' \in E.
    \]
    Therefore, $P$ is pointwise  $\e$-close to uniform by Lemma~\ref{lem:epsilon-close}.
  \end{proof}
  
  \begin{figure}[htb!]
    \fbox{
      \begin{minipage}{0.9\textwidth}
	$\saue(n,\mhat, \e)$ \label{saue}
	\smallskip
	\begin{compactenum}
        \item Set $\thr = \sqrt{2 \mhat / \e}$.
	\item For $i=1$	 to $q=\setq$ do: \label{step:q}
	  \begin{compactenum}
	  \item With probability $1/2$ invoke $\sle(\thr)$ and with probability $1/2$ invoke $\she(\thr)$.
	  \item If an edge $(u,v)$ was returned, then \textbf{return} $(u,v)$.
	  \end{compactenum}
	\item \textbf{Return} \emph{fail}.
	\end{compactenum}
      \end{minipage}
    }
  \end{figure}

\sloppy
  By Lemma~\ref{lem:sample-edge}, we can sample an almost uniform edge from $E$:
  repeatedly call $\sle$ or $\she$ each with probability $1/2$ until an edge is returned. The following lemma shows that $\saue$ returns an edge almost uniformly so long as a good estimate of $m$ is known to the algorithm. We can obtain such an estimate by applying the algorithm of Goldreich \& Ron (Theorem~\ref{thm:gr08}) without increasing the asymptotic query cost of our algorithm.\footnote{If one wishes sample multiple edges, the approximate value of $m$ should be re-computed for each sample.} We note that if the number of invocations $q$ (set in Step~\ref{step:q}) is greater than $n$ then we can simply revert to an algorithm that samples a uniform vertex and query for its $i\th$ neighbor where $i$ is drawn uniformly in $[n]$.

  \begin{lem}
    \label{lem:saue}
    If $m \leq \mhat \leq 2m$, then Algorithm $\saue$ returns an edge $(u,v) \in E$ with probability at least $2/3$. The distribution induced by the algorithm is pointwise $\e$-close to uniform over $E$.
  \end{lem}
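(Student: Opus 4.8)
The plan is to reduce the entire analysis to a single iteration of the combined procedure analyzed in Lemma~\ref{lem:sample-edge}, and then amplify the (necessarily tiny) per-iteration success probability by independent repetition. First I would verify that the hypotheses of Lemma~\ref{lem:sample-edge} hold for the threshold set in Step~1. Since $m \le \mhat$, we have $\thr = \sqrt{2\mhat/\e} \ge \sqrt{2m/\e}$, so Lemma~\ref{lem:sample-edge} applies to each iteration of the loop of Step~\ref{step:q} exactly as stated (we are assuming $0 < \e < 1/2$): one iteration returns an edge with some probability $p \ge (1-\e)\,m/(2n\thr)$, and conditioned on returning an edge, that edge is drawn from a fixed distribution $P$ on $E$ which is pointwise $\e$-close to uniform. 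Here $p = \sum_{e \in E} p_e$, where $p_e$ is the probability that a single iteration returns the specific (directed) edge $e$, and $P(e) = p_e / p$.

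Next I would establish the distribution claim. Because the iterations are mutually independent and identically distributed, and $\saue$ outputs the edge returned by its \emph{first} successful iteration, a one-line computation gives
\[
\Pr[\saue \text{ returns } e] \;=\; p_e \sum_{i=1}^{q} (1-p)^{i-1} \;=\; p_e \cdot \frac{1 - (1-p)^q}{p}\,.
\]
Summing over $e$, the probability that $\saue$ returns some edge is $1 - (1-p)^q$, and therefore, conditioned on $\saue$ returning an edge, the output equals $e$ with probability $p_e / p = P(e)$. Thus the conditional output distribution of $\saue$ is exactly the single-iteration distribution $P$, which is pointwise $\e$-close to uniform; this gives the second assertion of the lemma.

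Finally I would prove the $2/3$ success bound by plugging in both halves of $m \le \mhat \le 2m$. Using $\mhat \le 2m$ in $\thr = \sqrt{2\mhat/\e} \le 2\sqrt{m/\e}$ yields $p \ge (1-\e)\sqrt{\e m}/(4n)$, and using $\mhat \le 2m$ again in $q = \setq$ yields $q \ge 10n / \big((1-\e)\sqrt{2\e m}\big)$. Multiplying these bounds, the $(1-\e)$, $n$, and $\sqrt{\e m}$ factors cancel and we get $pq \ge 10/(4\sqrt2) = 5\sqrt2/4 > 1$. Hence the probability that all $q$ iterations fail is at most $(1-p)^q \le e^{-pq} \le e^{-5\sqrt2/4} < 1/3$, so $\saue$ returns an edge with probability at least $2/3$, as claimed. (If $q$ is not an integer, one uses $\lceil q\rceil \ge q$, or notes $q \ge 10$ so that rounding down changes nothing material.)

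I do not expect a genuine obstacle here: the argument is bookkeeping on top of Lemma~\ref{lem:sample-edge}. The only points requiring care are tracking which direction each bound on $\mhat$ is used --- the lower bound $m \le \mhat$ to legitimize the threshold in Lemma~\ref{lem:sample-edge}, and the upper bound $\mhat \le 2m$ to lower-bound both $p$ and $q$ --- and checking that the resulting constant $pq \ge 5\sqrt2/4 \approx 1.77$ is large enough that $e^{-pq} < 1/3$.
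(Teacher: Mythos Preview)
Your proposal is correct and follows essentially the same approach as the paper: verify $\thr \ge \sqrt{2m/\e}$ from $m \le \mhat$, invoke Lemma~\ref{lem:sample-edge} for the per-iteration success probability and distribution guarantee, and then bound the failure probability of $q$ independent trials. Your treatment is in fact slightly more careful than the paper's --- you explicitly justify (via the geometric-series computation) that the conditional output distribution equals the single-iteration distribution $P$, and you track which inequality on $\mhat$ feeds which bound --- whereas the paper simply asserts the distribution claim and leaves the constant in the final $(1-p)^q < 1/3$ inequality implicit.
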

  
  \begin{proof}
    Let $\chi_i$ be the indicator variable for the event that an edge $(u,v)$ was returned in the $i\th$ step of the for loop of the algorithm. By Lemmas~\ref{lem:light} and~\ref{lem:heavy} if $m \leq \mhat\leq 2m$, then for every $i=1,\ldots,q$, it holds that 
    \[
    \Pr[\chi_i=1] > (1 - \e) \frac{m}{2 n \theta} \geq (1 - \e)\frac{\sqrt{m \e}}{4 n}\;.
    \]
    Hence, if $m \leq \mhat\leq 2m$, then the probability that no edge is returned in $q=\setq$ invocations is at most 
    \[
    \left(1-\frac{(1-\eps)\sqrt{\eps m}}{4n}\right)^{\setq} < 1/3.
    \]
    If $m \leq \mhat\leq 2m$, then it also holds that $\thr\geq \sqrt{2m/\e}$, and by Lemma~\ref{lem:sample-edge} the  returned edge is sampled from a distribution which is pointwise  $\e$-close to uniform.
  \end{proof}

  Finally, if $m$ is not known to the algorithm, we invoke Theorem~\ref{thm:gr08} to obtain an estimate $\mhat$ of $m$.

  \begin{proof}[Proof of Theorem~\ref{thm:alg}]
    Let $\mhat$ be an estimate of $m$. We call $\mhat$ a \emph{good} estimate if it satisfies $m \leq \mhat \leq 2m$. By repeating the algorithm of Goldreich \& Ron (Theorem~\ref{thm:gr08}) $O(\log(n / \e))$ times, then taking $\mhat$ to be the median value reported, a straightforward application of Chernoff bounds guarantees that $\mhat$ is good with probability at least $1 - \e / 2 n^2$.
    If $\mhat$ is good, by Lemma~\ref{lem:saue}, calling $\saue(n, \mhat,  \e/2)$ will successfully return an edge $e$ with probability at least $2/3$, and the returned edge is distributed according to a distribution $P$ which is pointwise  $\e/2$-close to uniform.

    Let $Q$ be the distribution of returned edges. If $\mhat$ is not good, we have no guarantee of the success probability of returning an edge, nor of the distribution from which the edge is drawn. However, since $\mhat$ is bad with probability at most $\e / 2 n^2$, for each $e$ we can bound
    \[
    \abs{Q(e) - U(e)} \leq \abs{Q(e) - P(e)} + \abs{P(e) - U(e)} \leq \frac{\e}{2 n^2} + \frac{\e}{2 m} \leq \frac{\e}{m} = \e U(e).
    \]
    Thus $Q$ is pointwise  $\e$-close to uniform.

    We now turn to analyze the expected query cost of the algorithm. By Theorem~\ref{thm:gr08}, the query cost of Goldreich \& Ron's algorithm is $\sO(n/\sqrt m)$. By Lemmas~\ref{lem:light} and~\ref{lem:heavy}, the procedures $\sle$ and $\she$ each performs a constant number of queries per invocation. Hence, the query cost of the algorithm is $O(q) = O\left({n}/{\sqrt{\eps \mhat}}\right)$. If $\mhat$ is good then $q$ is at most  $O\left({n}/{\sqrt{\eps m}}\right)$, and otherwise it is at most $O(n)$. Since $m$ is good with probability at least $1-\eps/2n^2$, it follows that the expected query cost is $\sO(n/\sqrt{\eps m})$.
    \end{proof}
  
  \section{A lower bound}
  \label{sec:lb}

  In this section we prove Theorem~\ref{thm:lb}, namely that any algorithm $\mA$ that samples an edge almost uniformly must perform $\Omega\left(n / \sqrt{m}\right)$ queries, even if $\mA$ is given $m$, the number of edges in the graph and is allowed to perform pair queries. 

  \begin{proof}[Proof of Theorem~\ref{thm:lb}]
    The result is trivial if $m = \Omega(n^2)$, so we assume that $m = o(n^2)$. Suppose there exists an algorithm $\mA$ that performs $t$ queries and with probability at least $2/3$ returns an edge sampled from a distribution $P$ satisfying $\tvd(P, U) \leq \e$ for $\e < 1/3$. Let $G'$ be an arbitrary graph, and let $n'$ and $m'$ denote the number of vertices and edges, respectively, in $G'$. Let $K$ be a clique on $k = \sqrt{2 m'}$ nodes. Let $G = G' \cup K$ be the disjoint union of $G'$ and $K$, and let $V_K$ and $E_K$ denote the vertices and edges of $K$ in $G$. Thus $G$ has $n = n' + k$ nodes, $m  > 2 m'$ edges, and $E_K$ contains at least $m / 2$ edges. The remainder of the proof formalizes the intuition that since $\mA$ makes relatively few queries, it is unlikely to sample vertices in $V_K$. Thus $\mA$ must sample edges from $E_K$ with probability significantly less than $1/2$.

    Assume that the vertices are assigned distinct labels from $[n]$ uniformly at random, independently of any decisions made by the algorithm $\mA$. Let $q_1, \ldots, q_t$ denote the set of queries that the algorithm performs, and let $a_1, \ldots, a_t$ denote the corresponding answers. We say that a query-answer pair $(q_i,a_i)$ is a \emph{witness pair} if (1) $q_i$ is a degree query of $v \in V_K$, or (2) $q_i$ is a neighbor query for some $v \in V_k$, or (3) $q_i$ is a pair query for some $(v, w) \in E_K$. For $i \in [t]$ let $\NW_i$ denote that event that $(q_1,a_1),\ldots,(q_{i},a_{i})$ are not witness pairs, and let $\NW = \NW_t$. Let $A_K$ be the event that $\mA$ returns some edge $e \in E_K$. Since $\tvd(P, U) < \e$, we must have $\abs{\Pr_P[A_K] - \Pr_U[A_K]} \leq \e$. Since $\abs{E_K} \geq m / 2$, we have $\Pr_U[A_K] \geq 1/2$, hence
    \begin{equation}
      \label{eqn:ak-lb}
      \Pr_P[A_k] \geq \frac 1 2 - \e.
    \end{equation}
    The law of total probability gives
    \begin{equation}
      \label{eqn:ak-tp}
      \Pr_P[A_K] = \Pr_P[\NW] \cdot \Pr_P[A_K \mid \NW] + \Pr_P[\NW^c] \cdot \Pr_P[A_K \mid \NW^c]\,,
    \end{equation}
    where $\NW^c$ denotes the complement of the event $\NW$.
    \begin{description}
    \item[Claim.] $\Pr_P[A_K \mid \NW] = o(1)$.
    \item[Proof of Claim.] Suppose the event $\NW$ occurs, i.e., $\mA$ does not observe a witness pair after $t$ queries. Thus, if $\mA$ returns an edge $e = (u, v) \in E_K$, it cannot have made queries involving $u$ or $v$. We can therefore bound
      \[
      \Pr_P[A_K \mid \NW] \leq \abs{E_K} \frac{1}{(n - 2 t)^2} \leq \frac{m}{2 (n - 2 t)^2} \leq \frac{2 m}{n^2}\,.
      \]
      The first inequality holds because the identities of any $u, v \in V_K$ are uniformly distributed among the (at least) $n - 2 t$ vertices not queried by $\mA$. The second inequality holds assuming $t < n / 4$. The claim follows from the assumption that $m = o(n^2)$.
    \end{description}
    Combining the result of the claim with equations~(\ref{eqn:ak-lb}) and~(\ref{eqn:ak-tp}) gives
    \begin{equation}
      \label{eqn:bound}
      \frac 1 2 - \e \leq \Pr_P[A_k] = \Pr_P[\NW] \cdot \Pr_P[A_K \mid \NW] + \Pr_P[\NW^c] \cdot \Pr_P[A_K \mid \NW^c] \leq \Pr_P[\NW^c] + o(1)\,.
    \end{equation}
    We bound $\Pr_P[\NW^c]$ by
    \begin{align*}
      \Pr[\NW^c] &= \Pr\left[\bigcup_{i \leq t} \set{(q_i, a_i) \text{ is the first witness pair}}\right]\\
      &= \sum_{i \leq t} \Pr[(q_i, a_i) \text{ is a witness pair} \mid \NW_{i-1}]\\
      &\leq \sum_{i \leq t} \frac{2 k}{n - 2 i} \leq \frac{4 k t}{n} \leq t \frac{4 \sqrt{2 m}}{n}.
    \end{align*}
    Combining this bound with (\ref{eqn:bound}) and solving for $t$ gives
    \[
    \frac 2 3 \paren{\frac 1 2 - \e - o(1)}\frac{n}{4 \sqrt{2 m}} < t\,.
    \]
    The factor of $2/3$ is because $\mA$ returns an edge with probability at least $2/3$. Thus, $t = \Omega(n / \sqrt{m})$, as desired.
  \end{proof}

  \section{Discussion}
  \label{sec:discussion}
%











  In this paper, we provided asymptotically matching upper and lower bounds for the problem of sampling an edge almost uniformly from a graph. While the approximation guarantee of our algorithm is strong, we do not know if it is possible to sample from an exactly uniform distribution in sublinear time.

  \begin{que}
  Is it possible to sample an edge exactly uniformly using $o(n)$ vertex, degree, neighbor, and pair queries?
  \end{que}

  As noted in the introduction, sampling edges (nearly) uniformly is equivalent to sampling each vertex with probability proportional to its degree. This distribution on vertices coincides with the stationary distribution for a random walk on a non-bipartite, connected, undirected graph (see, for example~\cite{Levin2009, Sinclair2012}). A random walk is similar to the algorithms we consider where the queries consist of a single vertex query (which may be taken from an arbitrary distribution), followed by many random neighbor queries. Thus, the mixing time of a random walk gives an upper bound on the query cost of sampling an edge almost uniformly (in total variational distance). 

  For many families of graphs, the mixing time may be significantly faster than the algorithm we proposed. However, the guarantee provided by our algorithm is independent of the graph's topology. When a graph is slowly mixing, our algorithm converges to the stationary distribution faster than a random walk precisely because our algorithm performs multiple vertex queries. We think it is interesting to study tradeoffs between the number of allowed vertex queries and neighbor queries. In general, we believe it would be valuable to consider a model in which different queries have predetermined costs, and the goal is to minimize the query cost of the algorithm for particular families of graphs.

  \bibliographystyle{plain}
  \bibliography{sampling-edges}

\end{document}